\providecommand{\U}[1]{\protect\rule{.1in}{.1in}}
\newtheorem{theorem}{Theorem} [section]
\newtheorem{corollary}[theorem]{Corollary}
\newtheorem{lemma}[theorem]{Lemma}
\newtheorem{proposition}[theorem]{Proposition}
\newtheorem{remark}[theorem]{Remark}
\newenvironment{proof}[1][Proof]{\noindent\textbf{#1.} }{\ \rule{0.5em}{0.5em}}
\begin{document}

\author{Vadim E. Levit\\Department of Computer Science and Mathematics\\Ariel University Center of Samaria, Ariel, Israel\\levitv@ariel.ac.il
\and Eugen Mandrescu\\Department of Computer Science\\Holon Institute of Technology, Holon, Israel\\eugen\_m@hit.ac.il}
\title{When $G^{2}$ is a K\"{o}nig-Egerv\'{a}ry graph?}
\maketitle

\begin{abstract}
The \textit{square} of a graph $G$ is the graph $G^{2}$ with the same vertex
set as in $G$, and an edge of $G^{2}$ is joining two distinct vertices,
whenever the distance between them in $G$ is at most $2$. $G$ is a
square-stable graph if it enjoys the property $\alpha(G)=\alpha(G^{2})$, where
$\alpha(G)$ is the size of a maximum stable set in $G$.

In this paper we show that $G^{2}$ is a K\"{o}nig-Egerv\'{a}ry graph if and
only if $G$ is a square-stable K\"{o}nig-Egerv\'{a}ry graph.\newline

\textbf{Keywords:} Square of a graph; Perfect matching; Maximum stable set.

\end{abstract}

\section{Introduction}

\mathstrut All the graphs considered in this paper are finite, undirected,
loopless and without multiple edges. For such a graph $G=(V,E)$ we denote its
vertex set by $V=V(G)$ and its edge set by $E=E(G)$. If $X\subset V$, then
$G[X]$ is the subgraph of $G$ spanned by $X$. By $G-W$ we mean the subgraph
$G[V-W]$, if $W\subset V(G)$.

The neighborhood of a vertex $v\in V$ is the set $N(v)=\left\{  w:w\in V\text{
\ \textit{and} }vw\in E\right\}  $, and $N(A)=\cup\left\{  N(v):v\in A\text{
}\right\}  $, for $A\subset V$. If $\left\vert N(v)\right\vert =\left\vert
\{w\}\right\vert =1$, then $v$ is a \textit{leaf} and $vw$ is a
\textit{pendant edge} of $G$.

By $C_{n}$, $K_{n}$, $P_{n}$ we denote the chordless cycle on $n\geq$ $4$
vertices, the complete graph on $n\geq1$ vertices, and respectively the
chordless path on $n\geq3$ vertices.

A stable set of maximum size will be referred as to a \textit{stability
system} of $G$. The \textit{stability number }of $G$, denoted by $\alpha(G)$,
is the cardinality of a stability system in $G$. Let $\Omega(G)$ denotes
$\{S:S$ \textit{is a stability system of} $G\}$.

A \textit{matching} is a set of non-incident edges of $G$; a matching of
maximum cardinality $\mu(G)$ is a \textit{maximum matching}, and a matching
covering all the vertices of $G$ is called a \textit{perfect matching}. $G$ is
a \textit{K\"{o}nig-Egerv\'{a}ry graph }provided $\alpha(G)+\mu(G)=\left\vert
V(G)\right\vert $, \cite{dem}, \cite{ster}. 

If $S$ is an independent set of a graph $G$ and $H=G[V-S]$, then we write
$G=S\ast H$. Clearly, any graph admits such representations.

\begin{theorem}
\label{th2}\cite{levm4} If $G$ is a graph, then the following assertions are equivalent:

\emph{(i)} $G$ is a \textit{K\"{o}nig-Egerv\'{a}ry} graph;

\emph{(ii)} $G=S\ast H$, where $S\in\Omega(G)$ and $\left\vert S\right\vert
\geq\mu(G)=\left\vert V(H)\right\vert $;

\emph{(iii)} $G=S\ast H$, where $S$ is an independent set with $\left\vert
S\right\vert \geq\left\vert V(H)\right\vert $ and $(S,V(H))$ contains a
matching $M$ of size $\left\vert V(H)\right\vert $.
\end{theorem}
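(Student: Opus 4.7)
The plan is to prove the three-way equivalence by establishing the cycle of implications (i) $\Rightarrow$ (ii) $\Rightarrow$ (iii) $\Rightarrow$ (i), since each step is short once one unwinds the definitions.

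For (i) $\Rightarrow$ (ii), I would begin with an arbitrary stability system $S \in \Omega(G)$ and put $H := G[V(G)-S]$, so that by construction $G = S \ast H$. The König-Egerv\'ary identity $\alpha(G)+\mu(G)=|V(G)|$ immediately gives $|V(H)| = |V(G)|-|S| = \mu(G)$. To verify $|S|\geq\mu(G)$, I would use that any matching covers $2\mu(G)$ distinct vertices, so $2\mu(G)\leq|V(G)| = \alpha(G)+\mu(G)$, whence $\mu(G)\leq\alpha(G)=|S|$.

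For (ii) $\Rightarrow$ (iii), the natural candidate for the matching $M$ is a \emph{maximum} matching of $G$, so $|M|=\mu(G)=|V(H)|$. Because $S$ is independent, no edge of $M$ lies entirely in $S$, so every edge of $M$ contributes at least one endpoint to $V(H)$. Since $M$ has $\mu(G)$ edges with pairwise disjoint endpoints, these contribute at least $\mu(G)$ distinct vertices to $V(H)$; but $|V(H)|=\mu(G)$, so each edge contributes \emph{exactly} one endpoint to $V(H)$ and the other to $S$. Thus $M$ itself is a matching between $S$ and $V(H)$ of the required size.

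For (iii) $\Rightarrow$ (i), the matching $M$ witnesses $\mu(G) \geq |M| = |V(H)|$, and the independence of $S$ gives $\alpha(G) \geq |S|$. Adding these, $\alpha(G)+\mu(G) \geq |S|+|V(H)| = |V(G)|$. The reverse inequality $\alpha(G)+\mu(G)\leq|V(G)|$ is a standard general bound (any maximum matching meets any stability system in at most $\mu(G)$ vertices, leaving at least $\mu(G)$ vertices outside the stability system), so equality holds.

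The main obstacle, modest as it is, is the counting step in (ii) $\Rightarrow$ (iii): one must observe that a maximum matching, when viewed against the bipartition $(S,V(H))$, is forced by the equality $|M|=|V(H)|$ together with the independence of $S$ to be a complete matching from $V(H)$ into $S$. Everything else reduces to manipulating the equation $\alpha(G)+\mu(G)=|V(G)|$.
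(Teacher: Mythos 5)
The paper does not prove this theorem at all---it is imported from the cited reference \cite{levm4}---so there is no internal proof to compare against. Your argument is correct and self-contained: the cycle (i) $\Rightarrow$ (ii) $\Rightarrow$ (iii) $\Rightarrow$ (i) goes through, and the one delicate point, the counting argument showing that a maximum matching of size $|V(H)|$ with no edge inside the independent set $S$ must pair each vertex of $V(H)$ with a distinct vertex of $S$, is handled correctly. The closing inequality $\alpha(G)+\mu(G)\leq|V(G)|$ is indeed the standard bound and your justification of it (each matching edge meets a stable set in at most one vertex) is the usual one. Nothing is missing.
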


$G$ is \textit{well-covered} if it has no isolated vertices and if every
maximal stable set of $G$ is also a maximum stable set, i.e., it is in
$\Omega(G)$ \cite{plum}. $G$ is called \textit{very well-covered} \cite{fav1},
provided $G$ is well-covered and $\left\vert V(G)\right\vert =2\alpha(G)$.
Some interrelations between well-covered and K\"{o}nig-Egerv\'{a}ry graphs
were studied in \cite{LevMan1998}, \cite{LevMan1999}.

The distance between two vertices $v,w\in V(G)$ is denoted by $dist_{G}(v,w)$,
or $dist(v,w)$ if no ambiguity. $G^{2}$ denotes the second power of graph $G$,
i.e., the graph with the same vertex set $V$ and an edge is joining distinct
vertices $v,w\in V$ whenever $dist_{G}(v,w)\leq2$. Clearly, any stable set of
$G^{2}$ is stable in $G$, as well, while the converse is not generally true.
Therefore, we may assert that $1\leq\alpha(G^{2})\leq\alpha(G)$. Let notice
that the both bounds are sharp. For instance, if:

\begin{itemize}
\item $G$ is not a complete graph and $dist(a,b)\leq2$ holds for any $a,b\in
V(G)$, then $\alpha(G)\geq2>1=\alpha(G^{2})$; e.g., for the $n$-star graph
$G=K_{1,n}$, with $n\geq2$, we have $\alpha(G)=n>$ $\alpha(G^{2})=1;$

\item $G=P_{4}$, then $\alpha(G)=\alpha(G^{2})=2$.
\end{itemize}

The graphs $G$ for which the upper bound of the above inequality is achieved,
i.e., $\alpha(G)=\alpha(G^{2})$, are called \textit{square-stable}; e.g., the
graph from Figure \ref{fig1}. \begin{figure}[h]
\setlength{\unitlength}{1cm}\begin{picture}(5,1.2)\thicklines
\multiput(4,0)(1,0){2}{\circle*{0.29}}
\multiput(4,1)(1,0){3}{\circle*{0.29}}
\put(4,0){\line(1,0){1}}
\put(5,1){\line(1,0){1}}
\put(5,0){\line(0,1){1}}
\put(5,0){\line(1,1){1}}
\put(4,0){\line(0,1){1}}
\multiput(7,0)(1,0){2}{\circle*{0.29}}
\multiput(7,1)(1,0){3}{\circle*{0.29}}
\put(7,0){\line(1,0){1}}
\put(8,1){\line(1,0){1}}
\put(7,0){\line(1,1){1}}
\put(7,0){\line(0,1){1}}
\put(7,1){\line(1,-1){1}}
\put(8,0){\line(0,1){1}}
\put(8,0){\line(1,1){1}}
\put(7,0){\line(2,1){2}}
\end{picture}
\caption{A square-stable graph $G$ and its $G^{2}$.}%
\label{fig1}%
\end{figure}
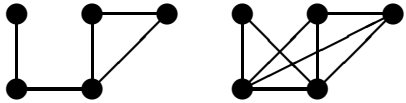

\begin{theorem}
\label{th5}\cite{LevMan2005} The graph $G$ is square-stable if and only if
there is some $S\in\Omega(G)$ such that $dist_{G}(a,b)\geq3$ holds for all
distinct $a,b\in S$.
\end{theorem}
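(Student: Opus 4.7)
The plan is to prove each direction by a direct translation between ``stable in $G^{2}$'' and ``pairwise $G$-distance at least $3$,'' using the already-noted inequality $\alpha(G^{2})\le\alpha(G)$ and the inclusion $E(G)\subseteq E(G^{2})$.

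For the forward implication, I would start with $G$ square-stable and pick any $S\in\Omega(G^{2})$. Because every stable set of $G^{2}$ is automatically stable in $G$ (as already observed in the introduction), $S$ is stable in $G$; and by hypothesis $|S|=\alpha(G^{2})=\alpha(G)$, so $S\in\Omega(G)$. Finally, for any two distinct $a,b\in S$, the non-edge $ab\notin E(G^{2})$ says exactly that $dist_{G}(a,b)>2$, i.e.\ $dist_{G}(a,b)\geq 3$. So this $S$ witnesses the claimed property.

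For the reverse implication, I would take $S\in\Omega(G)$ with $dist_{G}(a,b)\geq 3$ for all distinct $a,b\in S$. By the definition of $G^{2}$, no pair in $S$ is joined in $G^{2}$, so $S$ is stable in $G^{2}$, giving $\alpha(G^{2})\geq |S|=\alpha(G)$. Combined with the universal bound $\alpha(G^{2})\leq\alpha(G)$ recalled in the introduction, we conclude $\alpha(G)=\alpha(G^{2})$, i.e.\ $G$ is square-stable.

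There is essentially no obstacle here: the statement is almost a tautology once one unpacks the definition of $G^{2}$ and the fact that $\alpha(G^{2})\le\alpha(G)$ with equality iff some maximum stable set of $G$ is still stable in $G^{2}$. The only point that deserves a brief sentence is the observation that $S\in\Omega(G^{2})$ is automatically in $\Omega(G)$ under the square-stability hypothesis, which makes the forward direction go through without any further work.
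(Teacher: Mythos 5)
Your proof is correct. The paper quotes Theorem \ref{th5} from \cite{LevMan2005} without reproducing a proof, so there is nothing internal to compare against; your argument is the natural one, unpacking the definition of $G^{2}$ in both directions and using $\alpha(G^{2})\leq\alpha(G)$, and both implications are complete as written.
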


In this paper we prove that $G^{2}$ is a K\"{o}nig-Egerv\'{a}ry graph if and
only if $G$ is a square-stable K\"{o}nig-Egerv\'{a}ry graph. In particular, we
deduce that the square of the tree $T$ is a K\"{o}nig-Egerv\'{a}ry graph if
and only if $T$ is well-covered.

\section{Results}

It is quite evident that $G$ and $G^{2}$ are simultaneously connected or
disconnected. Thus in the rest of the paper all the graphs are connected.

\begin{lemma}
\label{Lemma1} If $G$ is a square-stable graph with $2$ vertices at least,
then $\alpha(G)\leq\mu(G)$.
\end{lemma}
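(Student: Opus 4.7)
The plan is to invoke Theorem \ref{th5} to obtain a stability system with well-separated vertices, and then extract a matching of the required size from the neighborhoods of those vertices.

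More precisely, since $G$ is square-stable, Theorem \ref{th5} provides some $S\in\Omega(G)$ with $dist_{G}(a,b)\geq 3$ for every pair of distinct $a,b\in S$. Our goal is to exhibit a matching in $G$ of size $|S|=\alpha(G)$. If $|S|=1$ then $\alpha(G)=1$ and, since $|V(G)|\geq 2$ and $G$ is (by the standing assumption) connected, $\mu(G)\geq 1$, so the inequality holds trivially. Hence assume $|S|\geq 2$.

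The key observation is that the distance condition forces the neighborhoods of vertices in $S$ to be pairwise disjoint: if $u\in N(a)\cap N(b)$ for distinct $a,b\in S$, then $dist_{G}(a,b)\leq 2$, contradicting the choice of $S$. Moreover, each $a\in S$ has at least one neighbor; indeed, since $|S|\geq 2$ and $G$ is connected, $a$ cannot be isolated, and any neighbor $\phi(a)$ of $a$ lies outside $S$ because $S$ is independent.

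Using the axiom of choice (finite, in fact), fix for every $a\in S$ a vertex $\phi(a)\in N(a)$. The pairwise disjointness of the neighborhoods guarantees $\phi(a)\neq\phi(b)$ whenever $a\neq b$, so
\[
M=\{\,a\phi(a):a\in S\,\}
\]
is a set of $|S|$ edges no two of which share an endpoint. Thus $M$ is a matching of cardinality $|S|$, whence $\mu(G)\geq |S|=\alpha(G)$, as required.

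I do not anticipate a real obstacle here; the proof is essentially a direct application of Theorem \ref{th5}, the main point being to notice that distances at least $3$ translate immediately into disjoint neighborhoods and hence into a private matching partner for each vertex of $S$.
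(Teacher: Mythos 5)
Your proof is correct and follows essentially the same strategy as the paper: both use Theorem \ref{th5} to obtain a stability system $S$ with pairwise distances at least $3$ and then build a matching saturating $S$ by assigning each vertex of $S$ a private neighbor, distinctness being forced by the distance condition. Your formulation via pairwise disjoint neighborhoods is in fact slightly cleaner than the paper's detour through shortest paths to $v_{\alpha(G)}$, and you correctly note the role of the standing connectedness assumption in guaranteeing that each vertex of $S$ has a neighbor.
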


\begin{proof}
According to Theorem \ref{th5}\emph{ }there exists a maximum stable set
\[
S=\left\{  v_{i}:1\leq i\leq\alpha(G)\right\}
\]
in $G$ such that $dist_{G}\left(  a,b\right)  \geq3$ for all pairwise distinct
$a,b\in S$. It follows that for every $i\in\left\{  1,2,...,\alpha\left(
G\right)  -1\right\}  $ there is a shortest path in $G$, of length $3$ at
least, connecting $v_{i}$ to $v_{\alpha\left(  G\right)  }$, say $v_{i}%
,w_{i},...w^{i},$ $v_{\alpha\left(  G\right)  }$ (see Figure \ref{Fig2}).
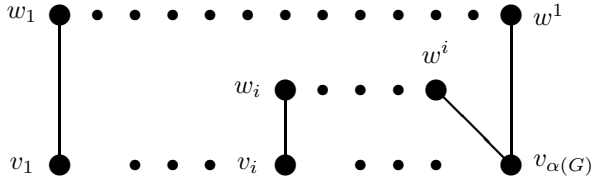
\begin{figure}[h]
\setlength{\unitlength}{1cm}\begin{picture}(5,2.3)\thicklines
\put(4,0){\circle*{0.29}}
\put(4,2){\circle*{0.29}}
\put(10,2){\circle*{0.29}}
\put(4,0){\line(0,1){2}}
\put(10,0){\line(0,1){2}}
\multiput(4,2)(0.5,0){12}{\circle*{0.15}}
\multiput(5,0)(0.5,0){3}{\circle*{0.15}}
\put(7,0){\circle*{0.29}}
\put(7,1){\circle*{0.29}}
\put(7,0){\line(0,1){1}}
\multiput(7,1)(0.5,0){4}{\circle*{0.15}}
\put(9,1){\circle*{0.29}}
\put(9,1){\line(1,-1){1}}
\multiput(8,0)(0.5,0){3}{\circle*{0.15}}
\put(10,0){\circle*{0.29}}
\put(3.5,0){\makebox(0,0){$v_{1}$}}
\put(3.5,2){\makebox(0,0){$w_{1}$}}
\put(6.5,0){\makebox(0,0){$v_{i}$}}
\put(6.5,1){\makebox(0,0){$w_{i}$}}
\put(9,1.5){\makebox(0,0){$w^{i}$}}
\put(10.5,2){\makebox(0,0){$w^{1}$}}
\put(10.7,0){\makebox(0,0){$v_{\alpha (G)}$}}
\end{picture}
\caption{$S=\{v_{1},...,v_{i},...,v_{\alpha(G)}\}\in\Omega(G)$ and
$M=\{v_{1}w_{1},...,v_{i}w_{i},...,v_{\alpha(G)}w^{1}\}$ is a matching in
$G$.}%
\label{Fig2}%
\end{figure}

All the vertices $w_{i},1\leq i\leq\alpha\left(  G\right)  -1$ and $w^{1}$ are
pairwise distinct, i.e.,
\[
w_{i}\neq w^{1},1\leq i\leq\alpha\left(  G\right)  -1,
\]
because, otherwise, there will be a pair of vertices in $S$ at distance $2$,
in contradiction with the hypothesis on $S$. Hence we deduce that
\[
M=\left\{  v_{i}w_{i}:1\leq i\leq\alpha(G)-1\right\}  \cup\left\{
v_{\alpha(G)}w^{1}\right\}
\]
is a matching in $G$ that saturates all the vertices of $S\in\Omega(G)$.
Consequently, we obtain $\alpha(G)=\left\vert S\right\vert =\left\vert
M\right\vert \leq\mu(G)$.
\end{proof}

\begin{remark}
The vertex $w^{1}$ in the proof of Lemma \ref{Lemma1} may be a common vertex
for more shortest paths connecting various $v_{i}$ to $v_{\alpha\left(
G\right)  }$ (see Figure \ref{fig11}).
\end{remark}

\begin{figure}[h]
\setlength{\unitlength}{1cm}\begin{picture}(5,2)\thicklines
\multiput(4,0.5)(1,0){5}{\circle*{0.29}}
\multiput(4,1.5)(2,0){3}{\circle*{0.29}}
\put(9,1.5){\circle*{0.29}}
\put(4,0.5){\line(1,0){4}}
\multiput(4,0.5)(2,0){3}{\line(0,1){1}}
\multiput(4,1.5)(2,0){2}{\line(1,-1){1}}
\put(8,1.5){\line(1,0){1}}
\put(8,0.5){\line(1,1){1}}
\put(4,1.9){\makebox(0,0){$v_{1}$}}
\put(6,1.9){\makebox(0,0){$v_{2}$}}
\put(9.4,1.5){\makebox(0,0){$v_{3}$}}
\put(5,0){\makebox(0,0){$w_{1}$}}
\put(7,0){\makebox(0,0){$w_{2}$}}
\put(8,0){\makebox(0,0){$u$}}
\put(3,1){\makebox(0,0){$G$}}
\end{picture}
\caption{$G$ has $\alpha(G)=\alpha(G^{2})=3=|\{v_{1}w_{1},v_{2}w_{2}%
,v_{3}u\}|<\mu(G)$, where $w^{1}=w^{2}=u$. }%
\label{fig11}%
\end{figure}
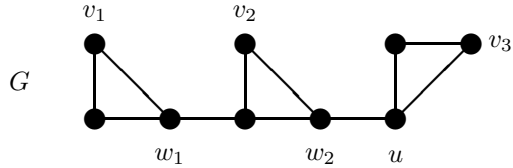

The graph $G$ in Figure \ref{fig1} is square-stable and has $\mu(G)=\mu
(G^{2})=2$, while the square-stable graph $G$ from Figure \ref{fig2} satisfies
$\mu(G)<\mu(G^{2})$. Notice that, in the both examples, neither $G$ nor
$G^{2}$ is a K\"{o}nig-Egerv\'{a}ry graph. 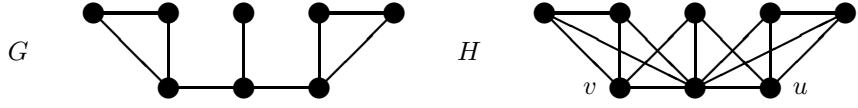
\begin{figure}[h]
\setlength{\unitlength}{1cm}\begin{picture}(5,1.2)\thicklines
\multiput(3,0)(1,0){3}{\circle*{0.29}}
\multiput(2,1)(1,0){5}{\circle*{0.29}}
\put(3,0){\line(1,0){2}}
\multiput(3,0)(1,0){3}{\line(0,1){1}}
\multiput(2,1)(3,0){2}{\line(1,0){1}}
\put(2,1){\line(1,-1){1}}
\put(5,0){\line(1,1){1}}
\put(1,0.5){\makebox(0,0){$G$}}
\multiput(9,0)(1,0){3}{\circle*{0.29}}
\multiput(8,1)(1,0){5}{\circle*{0.29}}
\put(9,0){\line(1,0){2}}
\put(9,0){\line(1,1){1}}
\put(9,1){\line(1,-1){1}}
\multiput(9,0)(1,0){3}{\line(0,1){1}}
\multiput(8,1)(3,0){2}{\line(1,0){1}}
\put(8,1){\line(1,-1){1}}
\put(8,1){\line(2,-1){2}}
\put(10,0){\line(2,1){2}}
\put(10,0){\line(1,1){1}}
\put(10,1){\line(1,-1){1}}
\put(11,0){\line(1,1){1}}
\put(8.6,0){\makebox(0,0){$v$}}
\put(11.4,0){\makebox(0,0){$u$}}
\put(7,0.5){\makebox(0,0){$H$}}
\end{picture}
\caption{$G^{2}=H+vu$ has $\alpha(G^{2})=\alpha(G)$, while $\mu(G)<\mu(G^{2}%
)$. }%
\label{fig2}%
\end{figure}

\begin{proposition}
\label{th1}Let $G^{2}$ be a K\"{o}nig-Egerv\'{a}ry graph with $2$ vertices at
least. Then the following assertions are equivalent:

\emph{(i)} $\alpha(G)=\alpha(G^{2})$;

\emph{(ii)} $\mu(G)=\mu(G^{2})$;

\emph{(iii)} $G$ is a K\"{o}nig-Egerv\'{a}ry graph with a perfect matching.
\end{proposition}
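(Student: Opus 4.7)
My plan is to establish the cyclic chain (i) $\Rightarrow$ (iii) $\Rightarrow$ (ii) $\Rightarrow$ (i), relying throughout on four easy facts: $V(G^{2})=V(G)$; the inclusion-induced inequalities $\alpha(G^{2})\leq\alpha(G)$ (every stable set in $G^{2}$ is stable in $G$) and $\mu(G)\leq\mu(G^{2})$ (since $E(G)\subseteq E(G^{2})$); the universal bound $\alpha(H)+\mu(H)\leq|V(H)|$, with equality characterizing K\"{o}nig-Egerv\'{a}ry graphs; and the KE hypothesis on $G^{2}$, i.e.\ $\alpha(G^{2})+\mu(G^{2})=|V(G)|$.

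For (i) $\Rightarrow$ (iii), I plan to feed square-stability into Lemma \ref{Lemma1} to get $\alpha(G)\leq\mu(G)$. Substituting $\alpha(G^{2})=\alpha(G)$ into the KE identity for $G^{2}$ then yields $\mu(G^{2})=|V(G)|-\alpha(G)$, and combined with the trivial bound $\mu(G^{2})\leq|V(G)|/2$ this forces $\alpha(G)\geq|V(G)|/2$. The chain $|V(G)|/2\leq\alpha(G)\leq\mu(G)\leq|V(G)|/2$ then collapses to $\alpha(G)=\mu(G)=|V(G)|/2$, so $G$ admits a perfect matching and $\alpha(G)+\mu(G)=|V(G)|$, i.e.\ $G$ is K\"{o}nig-Egerv\'{a}ry. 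The remaining implications are arithmetic: (iii) $\Rightarrow$ (ii) follows by sandwiching $|V(G)|/2=\mu(G)\leq\mu(G^{2})\leq|V(G)|/2$; and (ii) $\Rightarrow$ (i) follows from $\alpha(G^{2})=|V(G)|-\mu(G^{2})=|V(G)|-\mu(G)\geq\alpha(G)\geq\alpha(G^{2})$, forcing equality throughout.

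The only delicate step is (i) $\Rightarrow$ (iii), and Lemma \ref{Lemma1} is precisely the tool that makes it go: it converts the distance-$\geq 3$ structure guaranteed by Theorem \ref{th5} into a matching of size $\alpha(G)$, which is exactly what one needs in order to turn the bound $\alpha(G)\geq|V(G)|/2$ into the existence of a perfect matching of $G$. Without that input there is no obvious reason why $G$ itself should admit a matching as large as $|V(G)|/2$; everything else in the argument is bookkeeping with the four basic facts listed above.
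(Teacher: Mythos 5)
Your proposal is correct and follows essentially the same route as the paper: Lemma \ref{Lemma1} supplies the key inequality $\alpha(G)\leq\mu(G)$ for the implication out of (i), and the remaining implications are the same bookkeeping with $\alpha(G^{2})\leq\alpha(G)$, $\mu(G)\leq\mu(G^{2})$, and the K\"{o}nig-Egerv\'{a}ry identity for $G^{2}$. The only cosmetic differences are that you arrange the implications cyclically rather than as (i)$\Rightarrow$(ii),(iii) plus two converses, and you route the arithmetic through $\mu(G^{2})\leq|V(G)|/2$ where the paper uses the equivalent bound $\mu(G^{2})\leq\alpha(G^{2})$.
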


\begin{proof}
The following inequalities are true for every graph $G$:
\[
\mu(G)\leq\mu(G^{2})\ \text{\textit{and}}\ \alpha(G^{2})\leq\alpha(G).
\]

Since $G^{2}$ is a K\"{o}nig-Egerv\'{a}ry graph, $\mu(G^{2})\leq\alpha\left(
G^{2}\right)  $. Consequently, we get
\[
\mu(G)\leq\mu(G^{2})\leq\alpha(G^{2})\leq\alpha(G).
\]

\emph{(i) }$\Longrightarrow$ \emph{(ii),(iii) }If $G$ is square-stable, then
these inequalities together with Lemma \ref{Lemma1}\ give
\[
\mu(G)=\mu(G^{2})=\alpha(G^{2})=\alpha(G).
\]
Moreover, we infer that
\[
\left\vert V(G)\right\vert =\mu(G^{2})+\alpha(G^{2})=\mu(G)+\alpha(G),
\]
which means that $G$ is a K\"{o}nig-Egerv\'{a}ry graph. In addition, $G$\ has
a perfect matching, because $\mu(G)=\alpha\left(  G\right)  $.

\emph{(iii) }$\Longrightarrow$ \emph{(i) }If $G$ is a K\"{o}nig-Egerv\'{a}ry
graph with a perfect matching, then
\[
\mu(G)+\alpha(G)=\left\vert V(G)\right\vert =\mu(G^{2})+\alpha(G^{2}%
)\ \text{\textit{and}}\ \mu(G)=\mu(G^{2}).
\]
Thus, we deduce that $\alpha(G)=\alpha(G^{2})$, i.e., $G$ is a square-stable graph.

\emph{(ii) }$\Longrightarrow$ \emph{(i) }If $\mu(G)=\mu(G^{2})$, then it
follows that
\[
\left\vert V(G)\right\vert =\alpha(G^{2})+\mu(G^{2})\leq\alpha(G)+\mu
(G^{2})=\alpha(G)+\mu(G)\leq\left\vert V(G)\right\vert ,
\]
which assures that $\alpha(G)=\alpha(G^{2})$, i.e., $G$ is a square-stable graph.
\end{proof}

It is worth noticing that if $G$ is square-stable, then it is not enough to
know that $\mu(G)=\alpha(G)$ in order to be sure that $G$ is a
K\"{o}nig-Egerv\'{a}ry graph; e.g., the graph from Figure \ref{fig1}.

\begin{remark}
There are K\"{o}nig-Egerv\'{a}ry graphs, whose squares are not
K\"{o}nig-Egerv\'{a}ry graphs; e.g., every even chordless cycle.
\end{remark}

\begin{remark}
There are non-K\"{o}nig-Egerv\'{a}ry graphs, whose squares are not
K\"{o}nig-Egerv\'{a}ry graphs; e.g., every odd chordless cycle.
\end{remark}

\begin{theorem}
\label{th3}If $G^{2}$ is a K\"{o}nig-Egerv\'{a}ry graph, then $G$ is a
square-stable K\"{o}nig-Egerv\'{a}ry graph with a perfect matching.
\end{theorem}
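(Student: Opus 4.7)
The plan is to reduce the claim to Proposition \ref{th1} by proving $\mu(G)=\mu(G^{2})$; since the reverse inequality $\mu(G)\leq\mu(G^{2})$ is automatic, the task is to exhibit a matching of $G$ of size $\mu(G^{2})$. I will construct such a matching by locally perturbing a K\"{o}nig-Egerv\'{a}ry matching of $G^{2}$.

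Concretely, Theorem \ref{th2} applied to the K\"{o}nig-Egerv\'{a}ry graph $G^{2}$ produces $G^{2}=S\ast H$ with $S\in\Omega(G^{2})$ together with a matching $M=\{h_{i}s_{i}:1\leq i\leq k\}$ in $G^{2}$ of size $k=|V(H)|=\mu(G^{2})$, where $h_{i}\in V(H)$ and $s_{i}\in S$. Split $M=M_{1}\cup M_{2}$ according to whether $h_{i}s_{i}\in E(G)$ or $dist_{G}(h_{i},s_{i})=2$, and for each $i\in M_{2}$ pick a middle vertex $w_{i}$ satisfying $h_{i}w_{i},w_{i}s_{i}\in E(G)$.

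The structural heart of the argument is a sequence of observations, all reducing to the single fact that distinct vertices of $S$ are pairwise at distance at least $3$ in $G$ (because $S$ is independent in $G^{2}$). First, $w_{i}\in V(H)$, for otherwise $w_{i}\in S$ would yield two adjacent $S$-vertices in $G$. Second, the $w_{i}$'s are pairwise distinct, for otherwise two distinct $s_{i},s_{j}$ would share a common neighbor in $G$. Writing $w_{i}=h_{\sigma(i)}$ then defines an injection $\sigma:M_{2}\rightarrow\{1,\ldots,k\}$. Third, $\sigma(i)\in M_{2}$, for otherwise $h_{\sigma(i)}s_{\sigma(i)}\in E(G)$ together with $h_{\sigma(i)}s_{i}\in E(G)$ would place $s_{i}$ and $s_{\sigma(i)}$ at distance at most $2$. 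Since in addition $w_{i}\neq h_{i}$, the map $\sigma$ is a fixed-point-free permutation of the finite set $M_{2}$.

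To conclude, set $e'_{i}=h_{i}s_{i}$ for $i\in M_{1}$ and $e'_{i}=h_{\sigma(i)}s_{i}$ for $i\in M_{2}$; each $e'_{i}$ lies in $E(G)$ by the choice of $w_{i}$. Because $\sigma$ permutes $M_{2}$, the $V(H)$-endpoints of $M'=\{e'_{i}\}$ are precisely the $h_{j}$'s and the $S$-endpoints are precisely the $s_{j}$'s, so $M'$ is a matching of $G$ of size $k=\mu(G^{2})$. Consequently $\mu(G)=\mu(G^{2})$, and Proposition \ref{th1} delivers that $G$ is a square-stable K\"{o}nig-Egerv\'{a}ry graph with a perfect matching. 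The anticipated obstacle is establishing that $\sigma$ stays inside $M_{2}$; once that is secured, the rest is a routine bookkeeping check on endpoints.
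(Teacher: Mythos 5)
Your proof is correct, and it follows the same overall strategy as the paper: decompose $G^{2}=S\ast H$ via Theorem \ref{th2}, manufacture a matching of $G$ of size $\mu(G^{2})$ inside the bipartite part $(S,V(H))$, conclude $\mu(G)=\mu(G^{2})$, and finish with Proposition \ref{th1}. The execution of the key combinatorial step, however, is genuinely different. The paper argues globally about neighborhoods: it shows that every $h\in V(H)$ has at most one $G$-neighbour in $S$ (Claim 1), that the $S$-vertices seeing $H$ in $G$ and in $G^{2}$ coincide (Claim 2), and then asserts that every $h\in V(H)$ has \emph{exactly} one $G$-neighbour $s(h)$ in $S$ and that $\{hs(h):h\in V(H)\}$ is a matching --- two points (existence of $s(h)$ for \emph{every} $h$, and injectivity of $h\mapsto s(h)$) that the paper leaves rather compressed. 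You instead start from a concrete maximum matching $M$ of $G^{2}$ guaranteed by Theorem \ref{th2}(iii) and locally reroute each distance-$2$ edge $h_{i}s_{i}$ through its middle vertex $w_{i}$; the observations that $w_{i}\in V(H)$, that the $w_{i}$ are distinct, and that $\sigma$ maps $M_{2}$ into itself all follow cleanly from the single fact that $S$ is at pairwise distance at least $3$ in $G$, and the permutation structure of $\sigma$ makes the endpoint bookkeeping for $M'$ airtight. (Note that fixed-point-freeness of $\sigma$, while true, is not actually needed --- only that $\sigma$ is a bijection of $M_{2}$.) Your version buys a fully explicit verification that the rerouted edges still form a matching, which is exactly the point glossed over in the paper's Claim 3; the paper's version buys a structural statement (each $H$-vertex has a unique $G$-neighbour in $S$) that is slightly stronger than what is strictly required. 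As in the paper, you implicitly assume $\left\vert V(G)\right\vert \geq2$ when invoking Proposition \ref{th1}, which is harmless given the standing conventions.
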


\begin{proof}
Since $G^{2}$ is a K\"{o}nig-Egerv\'{a}ry graph, Theorem \ref{th2} ensures
that $G^{2}=S\ast H$, where $S\in\Omega(G^{2})$, $\mu(G^{2})=\left\vert
V(H)\right\vert $ and every maximum matching of $G^{2}$ is contained in
$(S,V(H))$.

Let $S=\{s_{j}:1\leq j\leq\alpha(G^{2})\}\in\Omega(G^{2})$ and $V(H)=\{h_{k}%
:1\leq k\leq\left\vert V(G)\right\vert -\alpha(G^{2})\}$.

\textit{Claim 1.} Every $h\in V(H)$ is joined, by an edge from $G$, to at most
one vertex of $S$.

Otherwise, if some $h\in V(H)$ has two neighbors $s_{i},s_{j}\in S$ such that
$hs_{i},hs_{j}\in E(G)$, then $s_{i}s_{j}\in E(G^{2})$, in contradiction to
the fact that $S$ is independent.

\textit{Claim 2.} $S_{G}(H)=S_{G^{2}}(H)$, where
\begin{align*}
S_{G}(H)  & =\{s\in S:\left(  \exists\right)  hs\in E(G),h\in V(H)\},\text{
and}\\
S_{G^{2}}(H)  & =\{s\in S:\left(  \exists\right)  hs\in E(G^{2}),h\in V(H)\}.
\end{align*}

Since $E(G)\subseteq E(G^{2})$, we get that $S_{G}(H)\subseteq S_{G^{2}}(H)$.
Assume that there is some $s\in S_{G^{2}}(H)-S_{G}(H)$. Hence, it follows that
there is some $h_{j}s\in E(G^{2})-E(G)$. Consequently, in $G$ must exist some
path on two edges from $s$ to $h_{j}$, and because $S$ is stable, it follows
that there is some $h_{k}\in V(H)$, such that $h_{k}h_{j},h_{k}s\in E(G)$ and
this contradicts the fact that $s\in S_{G^{2}}(H)-S_{G}(H)$.

\textit{Claim 3.} There is a maximum matching in $G^{2}$ containing only edges
from $G$.

Combining \textit{Claim} 1 and \textit{Claim} 2, it follows that every $h\in
V(H)$ is joined, by an edge from $G$, to exactly one vertex of $S$, say $s(h)
$, because, otherwise, we get $S_{G}(H)\neq S_{G^{2}}(H)$. Now, the set
$M=\{hs(h):h\in V(H)\}$ is a matching both in $G$ and in $G^{2}$. Moreover, by
Theorem \ref{th2}, $M$ is a maximum matching in $G^{2}$, because $\left\vert
M\right\vert =\left\vert V(H)\right\vert $. Consequently, we deduce that
$\left\vert M\right\vert \leq\mu(G)\leq\mu(G^{2})=\left\vert M\right\vert $,
which implies $\mu(G)=\mu(G^{2})$.

According to Proposition \ref{th1}, it follows that $G$ is a square stable
K\"{o}nig-Egerv\'{a}ry graph having a perfect matching.
\end{proof}

Notice that the converse of Theorem \ref{th3} is not generally true; e.g.,
$G=C_{2n},n\geq2$. 

Now we are ready to formulate the main finding of the paper.

\begin{theorem}
\label{th4}For a graph $G$ of order $n\geq2$ the following assertions are equivalent:

\emph{(i)} $G^{2}$ is a K\"{o}nig-Egerv\'{a}ry graph;

\emph{(ii)} $G$ is a square-stable K\"{o}nig-Egerv\'{a}ry graph;

\emph{(iii)} $G$ has a perfect matching consisting of pendant edges;

\emph{(iv)} $G$ is very well-covered with exactly $\alpha(G)$ leaves.
\end{theorem}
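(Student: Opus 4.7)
I would prove the cycle (i) $\Rightarrow$ (ii) $\Rightarrow$ (iii) $\Rightarrow$ (iv) $\Rightarrow$ (i); the first implication is precisely Theorem \ref{th3}. For (ii) $\Rightarrow$ (iii), I would apply Theorem \ref{th5} to produce a maximum stable set $S\in\Omega(G)$ whose vertices are pairwise at $G$-distance at least $3$. Combining the K\"{o}nig-Egerv\'{a}ry equality $\alpha(G)+\mu(G)=n$ with the bound $\alpha(G)\leq\mu(G)$ from Lemma \ref{Lemma1} forces $\alpha(G)=\mu(G)=n/2$, so Theorem \ref{th2}(iii) supplies a matching $M$ of size $|V(H)|=|S|$ in a decomposition $G=S\ast H$. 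The distance-$3$ condition then does double duty: it prevents any $h\in V(H)$ from being adjacent to two vertices of $S$, and also prevents any $s\in S$ from having a neighbor $h\in V(H)$ whose $M$-partner $s'$ differs from $s$ (otherwise the path $s,h,s'$ would witness $dist_{G}(s,s')\leq 2$). Hence every $s\in S$ has a unique $G$-neighbor and is a leaf, so $M$ is a perfect matching of pendant edges.

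For (iii) $\Rightarrow$ (iv), suppose $M=\{s_{i}h_{i}:1\leq i\leq n/2\}$ is a perfect matching of pendant edges. Then $\{s_{1},\ldots,s_{n/2}\}$ is a stable set of $n/2$ leaves, so $\alpha(G)\geq n/2$; combined with $\mu(G)\geq n/2$ and $\alpha(G)+\mu(G)\leq n$, this gives $\alpha(G)=n/2$. A short exchange argument shows that every maximal stable set contains exactly one vertex from each pair $\{s_{i},h_{i}\}$, proving $G$ well-covered; together with $|V(G)|=2\alpha(G)$ this upgrades to very well-covered. Connectedness for $n\geq 3$ rules out any $h_{i}$ being itself a leaf (else $\{s_{i},h_{i}\}$ would be a $K_{2}$-component), so the leaves are exactly $\{s_{1},\ldots,s_{n/2}\}$, of cardinality $\alpha(G)$. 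For (iv) $\Rightarrow$ (i), I would show that the leaf set $L$ is a maximum stable set, that every $v\notin L$ is adjacent to some leaf (otherwise $L\cup\{v\}$ would extend $L$), and that the counting $|V(G)\setminus L|=|L|$ forces distinct leaves to have distinct neighbors. Any two leaves are therefore at $G$-distance $\geq 3$, so $L$ remains stable in $G^{2}$, while the leaf-to-neighbor bijection is already a perfect matching of $G^{2}$; hence $\alpha(G^{2})+\mu(G^{2})=n$.

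The main technical hurdle lies in step (ii) $\Rightarrow$ (iii), where the distance-$3$ property has to be exploited symmetrically on both sides of the matching $M$ in order to promote the generic K\"{o}nig-Egerv\'{a}ry decomposition $G=S\ast H$ to the rigid pendant configuration; once this is in hand, the remaining implications are essentially bookkeeping with the inequality $\alpha+\mu\leq n$ and the definition of very well-covered.
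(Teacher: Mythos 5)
Your proof is correct and takes a genuinely more self-contained route than the paper's. The paper only argues the equivalence (i) $\Leftrightarrow$ (ii) directly: the forward direction via Theorem \ref{th3}, exactly as you do, and the reverse direction via the one-line chain
\[
\left\vert V(G)\right\vert =\alpha(G)+\mu(G)=\alpha(G^{2})+\mu(G)\leq\alpha(G^{2})+\mu(G^{2})\leq\left\vert V(G^{2})\right\vert =\left\vert V(G)\right\vert ,
\]
and it then cites \cite{LevMan2009} for all of (ii) $\Leftrightarrow$ (iii) $\Leftrightarrow$ (iv). You instead run the cycle (i) $\Rightarrow$ (ii) $\Rightarrow$ (iii) $\Rightarrow$ (iv) $\Rightarrow$ (i), reproving the outsourced equivalences from the paper's own tools (Theorem \ref{th5}, Lemma \ref{Lemma1}, Theorem \ref{th2}) and closing the loop with an explicit construction: distinct leaves are at distance at least $3$, so the leaf set stays stable in $G^{2}$, while the pendant matching is already a perfect matching of $G^{2}$. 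The paper's route buys brevity; yours buys self-containment and makes the structural picture (a perfect matching of pendant edges whose leaf endpoints are pairwise far apart) explicit. Two points to tighten. First, in (ii) $\Rightarrow$ (iii) you should not lean on Theorem \ref{th2}(iii) to decompose $G$ with respect to the \emph{particular} distance-$3$ set $S$ coming from Theorem \ref{th5}, since Theorem \ref{th2} only asserts that \emph{some} $S\in\Omega(G)$ admits such a decomposition; but this is easily repaired, because once Lemma \ref{Lemma1} and the K\"{o}nig-Egerv\'{a}ry equality force $\alpha(G)=\mu(G)=n/2$, any maximum matching is perfect and, having at most one endpoint of each edge in $S$, must pair $S$ bijectively with $V(G)-S$, which is all you use. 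Second, the case $n=2$ (i.e., $G=K_{2}$) satisfies (i)--(iii) but has two leaves while $\alpha(K_{2})=1$, so (iv) as literally stated fails there; your (iii) $\Rightarrow$ (iv) correctly assumes $n\geq3$ at that step, and this boundary defect is inherited from the theorem statement (and the cited source), not introduced by your argument.
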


\begin{proof}
The implication \emph{(i) }$\Longrightarrow$ \emph{(ii) }follows from Theorem
\ref{th3}. The proof of the implication \emph{(ii) }$\Longrightarrow$
\emph{(i)} is in the following series of inequalities:
\[
\left\vert V(G)\right\vert =\alpha(G)+\mu(G)=\alpha(G^{2})+\mu(G)\leq
\alpha(G^{2})+\mu(G^{2})\leq\left\vert V(G^{2})\right\vert =\left\vert
V(G)\right\vert .
\]

All the equivalences between \emph{(ii)},\emph{\ (iii) }and\emph{\ (iv)} have
been proved in \cite{LevMan2009}.\emph{\ }
\end{proof}

It was shown in \cite{Rav77} that a tree having at least two vertices is
well-covered if and only if it has a perfect matching consisting of pendant
edges. It was also mentioned there that every well-covered tree of order at
least two is very well-covered as well. Combining these observations with
Theorem \ref{th4} we obtain the following.

\begin{corollary}
The square of a tree is a K\"{o}nig-Egerv\'{a}ry graph if and only if the tree
is well-covered.
\end{corollary}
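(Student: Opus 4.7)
The plan is to derive the corollary by chaining Theorem~\ref{th4} with Ravindra's characterization of well-covered trees quoted in the sentence preceding the statement. Specifically, I would aim to produce
\[
T^{2}\text{ is K\"{o}nig--Egerv\'{a}ry}\iff T\text{ admits a perfect matching of pendant edges}\iff T\text{ is well-covered.}
\]
The left-hand equivalence is precisely (i) $\Longleftrightarrow$ (iii) of Theorem~\ref{th4}, which applies to any graph of order $n\geq 2$ and hence, a fortiori, to any tree on at least two vertices. The right-hand equivalence is the quoted result of Ravindra from \cite{Rav77}.

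The only preliminary I would handle is the boundary case $|V(T)|=1$. The single-vertex tree is not well-covered under the convention adopted here (well-covered graphs have no isolated vertices), so this case is not in the scope of the corollary; alternatively, I would simply inherit the standing hypothesis $n\geq 2$ from Theorem~\ref{th4}. Since Section~2 already assumes all graphs are connected, no separate treatment of forests is necessary. The additional remark from \cite{Rav77} that every well-covered tree on at least two vertices is automatically very well-covered is not strictly needed for the proof, since the equivalence (i) $\Longleftrightarrow$ (iii) of Theorem~\ref{th4} already bypasses the very well-covered condition (iv); I would nevertheless mention it to explain why condition (iv) is consistent with the tree setting.

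I do not foresee a genuine obstacle here: the corollary is a direct specialization of Theorem~\ref{th4} to trees, obtained by substituting Ravindra's tree-specific reformulation of condition (iii) into the theorem's chain of equivalences. The entire proof is a two-line invocation of the two results, and the main task is simply to write it cleanly.
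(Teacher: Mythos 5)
Your proposal is correct and matches the paper's own argument: the corollary is obtained by combining the equivalence (i) $\Longleftrightarrow$ (iii) of Theorem~\ref{th4} with Ravindra's characterization of well-covered trees as those having a perfect matching of pendant edges. Your extra care about the one-vertex tree is a harmless refinement of the same two-line derivation.
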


\section{Conclusions}

Recall that $\theta(G)$ is the \textit{clique covering number} of $G$, i.e.,
the minimum number of cliques whose union covers $V(G)$; $i(G)=\min
\{|S|:S\ $\textit{is a maximal stable set in }$G\}$, and $\gamma
(G)=\min\{|D|:D\ $\textit{is a minimal domination set in }$G\}$. In general,
it can be shown that the graph invariants mentioned above are related by the
following inequalities:
\[
\alpha(G^{2})\leq\theta(G^{2})\leq\gamma(G)\leq i(G)\leq\alpha(G)\leq
\theta(G)\text{, }%
\]
which turn out to be equalities, when $\alpha(G^{2})=\alpha(G)$ or
$\theta(G^{2})=\theta(G)$ \cite{ranvol}. 

It seems interesting to find out some other graph operations and invariants
such that interrelations between them may lead to K\"{o}nig-Egerv\'{a}ry graphs.

\end{document}